\begin{document}
%
\title{Efficient Beamforming for MIMO Relaying Broadcast Channel with Imperfect Channel Estimation}

\author{Zijian Wang, Wen Chen,~\IEEEmembership{Senior Member,~IEEE} and Jun Li,~\IEEEmembership{Member,~IEEE}
\thanks{Copyright (c) 2010 IEEE. Personal use of this material is permitted. However,
permission to use this material for any other purposes must be
obtained from the IEEE  by sending a request to
pubs-permissions@ieee.org.}

\thanks{Z. Wang is with Department of Electronic Engineering, Shanghai
Jiao Tong University, China, and the State Key Laboratory of
Integrated Services Networks, Xidian University, Xi¡¯an 710071
(e-mail: wangzijian1786@sjtu.edu.cn).}

\thanks{W. Chen is with Department of Electronic Engineering, Shanghai Jiao Tong University, China, and
the State Key Laboratory for Mobile Communications, Southeast
University, Nanking, China (e-mail: wenchen@sjtu.edu.cn).}

\thanks{ J. Li is with the
School of Electrical Engineering and Telecommunications, University
of New South Wales, Australian (email: jun.li@unsw.edu.au).}

\thanks{This work is supported by NSF China~\#60972031,
by national 973 project~\#2012CB316106 and~\#2009CB824900, by
national huge special project~\#2012ZX03004004, by national key
laboratory project~\#W200907 and~\#ISN11-01, by Huawei Funding
\#YBWL2010KJ013.}
}

\markboth{IEEE Transactions on Vehicular Technology.}{}
\maketitle

\begin{abstract}
We consider a multiple-input multiple-output (MIMO) relaying
boardcast channel in downlink cellular networks,  where the base
station and the relay stations are both equipped with multiple
antennas, and each user terminal has only a single antenna. In
practical scenarios, channel estimation is imperfect at the
receivers. Aiming at maximizing the SINR at each user, we develop
two robust linear beamforming schemes respectively for the single
relay case and  the multi-relay case. The two proposed schemes are
based on sigular value decomposition (SVD), minimum mean square
error (MMSE) and regularized zero-forcing (RZF). Simulation results
show that the proposed scheme outperforms the conventional schemes
with imperfect channel estimation.

\end{abstract}

\begin{keywords}
MIMO relaying broadcast, MMSE receiver, RZF precoding, SINR, Singular value
decomposition.
\end{keywords}

\section{Introduction}

In recent years, MIMO relay networks have drawn considerable
interest  due to the advantages to increase the data rate and extend
coverage in the cellular edge. The MIMO relay network with perfect
channel state information (CSI) have been studied in~\cite{9,19}. In
\cite{9}, the authors investigate the linear processing at relay for
MIMO relay networks with fairness requirement. In~\cite{19}, the
authors investigate the regularized zero-forcing (RZF) precoder at
relays, which is observed to have an advantage to the zero-forcing
(ZF) and the matched filter (MF) precoders. But the RZF precoder is
not optimized and constantly chooses one as the regularing factor.
The MIMO relaying broadcast network has been considered
in~\cite{21}, where the singular value decomposition (SVD) and ZF
precoder are respectively used to the backward channels (BC) and the
forward channels (FC) to optimize the joint precoding. The authors
use an iterative method to show that the optimal precoding matrices
always diagonalize the compound channel of the system.

All the above works consider perfect CSIs. However, perfect CSI is
usually difficult to be obtained for a practical system.
In~\cite{25}, MMSE based precoding has been considered in multiple
antenna broadcast channel with  imperfect CSI at the source.
In~\cite{wang}, the authors optimized a QR based beamformings with imperfect $\mathcal{R}$-$\mathcal{D}$ CSI due to large delay.
 Works for limited feedback in MIMO relay networks are studied in~\cite{23,30}, and in MIMO relaying broadcast
channel are studied in~\cite{22,26,A}.
 In~\cite{22}, the authors further
study the impact of feedback bits of BC and FC on the
achievable rates for the linear processing scheme in~\cite{21}.
 In
~\cite{26}, based on MMSE criteria, robust ZF precoding are
considered at the relay using the limited feedback of CSI to the
relay. But only imperfect forward channel (FC) is considered.
In~\cite{A}, the authors propose an MMSE based beamforming design in
a MIMO relay broadcast channel with finite rate feedback.

In this paper, we study MIMO relaying downlink broadcast channel in
a  wireless cellular network. Focusing on linear beamformings, we propose a robust beamforming scheme considering both imperfect channel estimation at relay and user terminals.
The proposed scheme is based on SVD-RZF for the single relay case and  MMSE-RZF for the multi-relay case. By maximizing
the derived signal-to-interference noise ratio (SINR), we optimize
the MMSE receiver and RZF precoder.  Simulation results show
that the proposed robust SVD-RZF and MMSE-RZF outperform
other conventional beamformers.

In this paper, boldface lowercase letter and boldface uppercase
letter represent vectors and matrices, respectively. Notation
$\mathbb{C}^N$ denotes an $N\times 1$ complex vector. The
$\mathrm{{tr}}(\mathbf{A})$ and $\mathbf{A}^H$ denote the trace and
the conjugate transpose of a matrix $\mathbf{A}$, respectively.
$(\mathbf{a})_k$ and $(\mathbf{A})_{j,k}$ represent the $k$-th entry
of vector $\mathbf{a}$ and the $(j,k)$-th entry of matrix
$\mathbf{A}$ respectively. $\mbox{\boldmath $\mathbf{I}$}_N$ denotes
the $N{\times}N$ identity matrix.
Finally, we denote the expectation operation by $\rm
E\{\cdot\}$.


\section{System Model}

We consider a MIMO relaying broadcast network which consists a base
station, $R$ fixed relays, and $K$ user terminals as depicted in
Fig.~1. The base station is equipped with $M$ antennas, each relay
is equipped with $N$ antennas and each user terminal only has a
single antenna. It is supposed that $M,N\geq K$ so that the network
can support $K$ independent data streams.  A broadcast transmission
is composed of two phases. During the first phase, the base station
broadcasts $M$ precoded data streams to the relays after applying a
linear precoder to the original data vector ${\bf{s}}\in
\mathbb{C}^K$, where $\mathrm{E} \{{\bf{s}}{\bf{s}}^H
\}={\bf{I}}_K$. We denote the precoding matrix at the base station
as ${\bf{F}}$ and suppose that the base station transmit power is
$P_s$.  Because we have $\mathrm{E} \{ {\bf{s}}^H {\bf{F}}^H
{\bf{F}} {\bf{s}}\}=\mathrm{tr}({\bf{F}}^H {\bf{F}})$, the power
control factor at the base station is
$\rho_s=\sqrt{\frac{P_s}{\mathrm{tr}({\bf{F}}^H {\bf{F}})}}$. The
received signal vector at the $r$-th relay is
\begin{equation}
{\bf{y}}_r=\rho_s {\bf{H}}_r {\bf{F}} {\bf{s}}+{\bf{n}}_r,
\end{equation}
where ${\bf{H}}_r\in  \mathbb{C}^{N\times M}$ is the Rayleigh BC
matrix of the $r$-th relay, in which, all entries are  $i.i.d$ complex Gaussian
distributed with zero mean and unit variance, and ${\bf{n}}_r\in
\mathbb{C}^N$  is the noise vector at the relay, in which, all the
entries are $i.i.d$ complex Gaussian distributed with zero mean and
variance $\sigma_1^2 {\bf{I}}_N$.
During the second phase, the relays all broadcast the signal vector to
the user terminals after a precoding matrix ${\bf{W}}_r$. The
transmit power at  the relay is $P_r$, and the power control factor
is $\rho_r$, where
\begin{equation}
\rho_r=\left(\frac{P_r}{\mathrm{tr}(\rho_s^2
\mathbf{W}_r{\mathbf{H}}_r
\mathbf{F}\mathbf{F}^H{{\mathbf{H}}}_r^H\mathbf{W}_r^H+\sigma_1^2\mathbf{W}_r\mathbf{W}_r^H)}
\right)^{\frac{1}{2}}. \label{8}
\end{equation}
\begin{figure}[!t]
\centering
\includegraphics[width=3in]{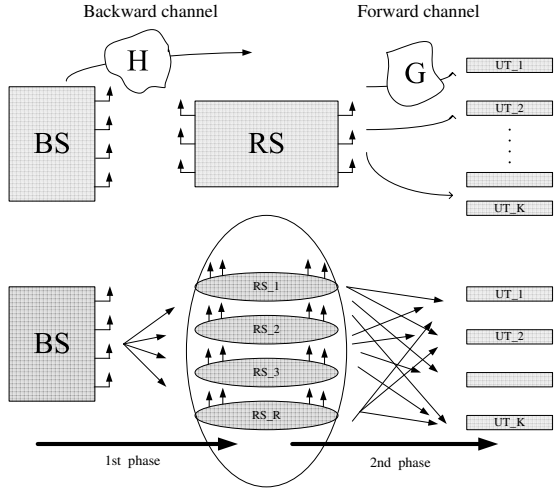}
\caption{MIMO relay broadcast channel with imperfect channel estimation.}
\end{figure}
Denoting the received signal at the $k$th user terminal as
${y}_k$, the received vector at user terminals can thus be written
as
\begin{equation}
\begin{split}
{\bf{y}} =& \left [ y_1, y_2,\ldots,y_K \right ]\\=&\sum_{r=1}^R\rho_r{\bf{G}}_r{\bf{W}}_r\left(\rho_s{\bf{H}}_r{\bf{F}} {\bf{s}}+{\bf{n}}_r \right)+{\bf{n}}_D
,
\label{4}
\end{split}
\end{equation}
where ${\bf{n}}_D\in \mathbb{C}^K$ denotes the noise vector at the
user terminals,  in which, all entries are $i.i.d$ Gaussian
distributed with zero mean and $\sigma_2^2$ variance, ${\bf{G}}_r$ is
the is the Rayleigh FC matrix of the $r$-th relay.

Considering imperfect channel estimation at both the relay and user
terminals,  we model the channel state information (CSI) as
\begin{equation}
{\bf{H}}_r=\widehat{\bf{H}}_r+e_1{\bf{\Omega}}_{1,r},
\end{equation}
and
\begin{equation}
\left [ {\mathbf{g}}_{1,r}^H,
{\mathbf{g}}_{2,r}^H,\ldots,  {\mathbf{g}}_{K,r}^H \right]^H={\bf{G}}_r=\widehat{\bf{G}}_r+e_2{\bf{\Omega}}_{2,r},
\end{equation}
where $\mathbf{g}_{k,r}^H\in \mathbb{C}^N$ is the CSI of the $r$-th relay to the $k$-th
user channel. The entries of ${\bf{\Omega}}_{1,r}$ and ${\bf{\Omega}}_{2,r}$
are $i.i.d$ complex Gaussian distributed with zero mean and unit
variance. $\widehat{\bf{H}}_r$ and $\widehat{\bf{G}}_r$ are the
estimated CSIs and they are respectively independent of
${\bf{\Omega}}_{1,r}$ and ${\bf{\Omega}}_{2,r}$. $e_1^2$ and $e_2^2$ denotes
the channel estimation error powers. We suppose that each user
has the same channel estimation error power for simplicity.
\section{SINR at User Terminals}
Considering channel estimation errors, (\ref{4}) becomes
\begin{multline}
\mathbf{y}=\sum_{r=1}^R\rho_s\rho_r\widehat{\mathbf{G}}_r\mathbf{W}_r\widehat{\mathbf{H}}_r\mathbf{F}\mathbf{s}\\+\sum_{r=1}^R
\rho_s\rho_r\left(e_1 \widehat{\mathbf{G}}_r\mathbf{W}_r\mathbf{\Omega}_{1,r}\mathbf{F}
+e_2\mathbf{\Omega}_{2,r}\mathbf{W}_r\widehat{\mathbf{H}}_r\mathbf{F}\right)\mathbf{s}\\+\sum_{r=1}^R
\rho_r\left(\widehat{\mathbf{G}}_r+e_2\mathbf{\Omega}_{2,r}\right)\mathbf{W}_r\mathbf{n}_r+\mathbf{n}_D,
\label{3}
\end{multline}
where we omitted the term involving $e_1e_2$ because we assume $e_1,e_2\ll1$. We can write (\ref{3}) as
\begin{equation}
\mathbf{y}=\mathbf{H}_{\mathrm{eff}}\mathbf{s}+\mathbf{n},
\end{equation}
where $\mathbf{H}_{\mathrm{eff}}\mathbf{s}$ is the first term and $\mathbf{n}$ is the rest terms in the right-hand-side of (\ref{3}). Then the SINR at the $k$-th user terminal can be calculated by
\begin{equation}
\mathrm{SINR}_k=\frac{{|(\mathbf{H}_{\mathrm{eff}})_{k,k}|}^2}{\sum\limits_{j=1,j\neq k}^K{|(\mathbf{H}_{\mathrm{eff}})_{k,j}|}^2+\mathrm{E}\{\mathbf{n}_k\mathbf{n}_k^{*}\}},
\end{equation}
where
\begin{multline}
\mathrm{E}\{\mathbf{n}_k\mathbf{n}_k^{*}\}
\\=\sum_{r=1}^R\left(\frac{e_1^2\rho_s^2\rho_r^2}{K}\mathrm{tr}(\mathbf{F}\mathbf{F}^H)
\mathrm{tr}(\widehat{\mathbf{G}}_r\mathbf{W}_r\mathbf{W}_r^H{\widehat{\mathbf{G}}_r}^H)\right.
\\+e_2^2\rho_s^2\rho_r^2\mathrm{tr}(\mathbf{W}_r\widehat{\mathbf{H}}_r\mathbf{F}\mathbf{F}^H{\widehat{\mathbf{H}}_r}^H\mathbf{W}_r^H)
\\+\frac{\rho_r^2\sigma_1^2}{K}\mathrm{tr}(\widehat{\mathbf{G}}_r\mathbf{W}_r\mathbf{W}_r^H{\widehat{\mathbf{G}}_r}^H)
\\\left.+\rho_r^2 e_2^2\sigma_1^2\mathrm{tr}(\mathbf{W}_r\mathbf{W}_r^H)\right)+\sigma_2^2.
\label{10}
\end{multline}
In the derivation, we used the fact  $\mathrm{E}\{\mathbf{\Omega} \mathbf{A} \mathbf{\Omega}^H\}=\mathrm{tr}(\mathbf{A})\mathbf{I}_N$ for any $N\times N$ matrix $\mathbf{A}$.
The expectation is taken over all distributions of $\mathbf{s},\mathbf{n}_r,\mathbf{n}_D,\mathbf{\Omega}_{1,r},\mathbf{\Omega}_{2,r}$. Our aim is to find the precoding matrix at the base station and the beamforming matrix at the relay to maximize the SINR at each user terminal.

It is difficult to directly obtain the optimum closed-form solution
because the  optimization problem is not convex. In fact, there is no
optimal beamforming even for perfect CSI in the MIMO relaying
broadcast channels~\cite{22}. In the following section, we propose a
robust beamforming scheme for two different cases which considers the imperfect channel
estimations.
\section{Robust Beamforming Design}
\subsection{SVD-RZF based design for the single relay case}
If there is only one relay, for the first phase, the transmission is similar to a point-to-point
MIMO system. Therefore, we propose an SVD-based beamforming for the
backward channel~\cite{1}. Using singular value decomposition (SVD),
the imperfect BC matrix can be decomposed as
\begin{equation}
\widehat{\bf{H}}=\widehat{\bf{H}}_1={\bf{U}}{\bf{\Sigma}}{\bf{V}}^H,
\end{equation}
where ${\bf{U}}\in \mathbb{C}^{N\times N}$ and ${\bf{V}}\in
\mathbb{C}^{M\times M}$ are  both unitary matrices, and
${\bf{\Sigma}}=\left [\mathbf{\Theta}|\mathbf{0}\right ]$, with
$\mathbf{\Theta}=\mathrm{diag}\left \{ \sqrt{\theta_1},\ldots,\sqrt{\theta_N}
\right \}$ and $\mathbf{0}$ being an $N\times \left (M-N\right )$
zero matrix. Then, we propose the precoding matrix ${\bf{F}}$ at
the base station as the first $K$ columns of ${\bf{V}}$ and the
receiving matrix ${\bf{W}}={\bf{W}}_1$ at the relay as ${\bf{U}}^H$. Thus we
have
\begin{equation}
\rho_s=\sqrt{\frac{P_s}{\mathrm{tr}({\bf{F}}^H {\bf{F}})}}=\sqrt{\frac{P_s}{K}}.
\end{equation}

For the second phase, the transmission is a broadcast channel.
Instead of zero-forcing (ZF) or matched-filter (MF) in tradition, we
design a robust regularized zero-forcing (RZF) precoder for the
forward channel (FC). Given the imperfect FC matrix, the RZF at
relay is $\widehat{\bf{G}}^H (\widehat{\bf{G}} \widehat{\bf{G}}^H +
\alpha {\bf{I}}_K)^{-1}$. We aim at optimizing $\alpha$ in the RZF
precoder in terms of SNRs of BC and FC and the powers of channel
estimation errors $e_1^2$ and $e_2^2$. Since the power penalty
problem of ZF mostly exists in the case $N=K$~\cite{3}, we assume
$N=K$. Generally, a non-zero $\alpha$ will bring interference, but
can reduce the power penalty. To optimize $\alpha$, we  need to
derive the SINR in terms of $\alpha$ at each user.  In the following
we will see that $\alpha$ can be optimized based on the SINR
expressed by the eigenvalues of the instantaneous CSI at each user
terminal, and for large $K$ case, the $\alpha$ is independent of the
instantaneous CSI. For SVD-RZF, we have
\begin{eqnarray}
\mathbf{F}&=&\mathbf{V},\label{6}\\
\mathbf{W}&=&\mathbf{W}_1= \widehat{\mathbf{G}}^H\left(\widehat{\mathbf{G}}\widehat{\mathbf{G}}^H+\alpha\mathbf{I}_k\right)^{-1}\mathbf{U}^H\label{7}.
\end{eqnarray}
In the following derivation, we use the decomposition
\begin{equation}
\widehat{\mathbf{G}}{\widehat{\mathbf{G}}}^H=\mathbf{Q}\mathrm{diag}\{\lambda_1,\ldots,\lambda_K\}{\mathbf{Q}}^H.
\label{9}
\end{equation}
Substituting (\ref{7}) and (\ref{9}) into (\ref{8}), we have the power control factor respectively as
\begin{equation}
\rho_s=\left(\frac{P_s}{K}\right)^{\frac{1}{2}},
\end{equation}
and (\ref{30}) which is written at the top of next page.

Substituting (\ref{6}) and (\ref{7}) into (\ref{10}), through some manipulations, we have the power of effective noise
\begin{multline}
N(\theta,\lambda)=\left(\rho_s^2 \rho_r^2 e_1^2+\rho_r^2\frac{\sigma_1^2}{K}\right)\sum \frac{\lambda^2}{(\lambda+\alpha)^2}\\+\left(\frac{1}{K}\rho_s^2\rho_r^2e_2^2\sum\theta+\rho_r^2e_2^2\sigma_1^2\right)\sum\frac{\lambda}{(\lambda+\alpha)^2}+\sigma_2^2,
\end{multline}
where in the derivation, we have taken expectation over unitary matrix $\mathbf{Q}$.
The received data signal vector at the user terminals can be calculated as
\begin{equation}
\rho_s\rho_r\widehat{\mathbf{G}}\mathbf{W}\widehat{\mathbf{H}}\mathbf{F}\mathbf{s}=
\rho_s\rho_r\widehat{\mathbf{G}}{\widehat{\mathbf{G}}}^H\left(\widehat{\mathbf{G}}{\widehat{\mathbf{G}}}^H+\alpha\mathbf{I}_k\right)^{-1}
\mathbf{\Theta}\mathbf{s}.
\end{equation}
From the above expression, we see that the effective channel matrix is not diagonal when $\alpha$ is not zero. So the received signal by a user terminal consists of the desired signal and the interference from other users' signal.
To divides the interference from the desired signal, we introduce the following two lemmas.
\newtheorem{theorem}{Theorem}
\newtheorem{lemma}[theorem]{Lemma}
\begin{lemma}
If $\mathbf{A}=\mathbf{Q}\mathbf{\Lambda}{\mathbf{Q}}^H$, then $\mathrm{E}\left\{\left(\mathbf{A}\right)_{k,k}^2\right\}=\frac{1}{K(K+1)}\left(\left(\sum\lambda\right)^2+\sum \lambda^2\right)\triangleq\mu(\lambda)$.
\end{lemma}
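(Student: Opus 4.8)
The plan is to write the diagonal entry as a quadratic form in the entries of the eigenvector matrix and then average over its randomness. Setting $q_{ki}=(\mathbf{Q})_{k,i}$, the $(k,k)$ entry expands as
\[
(\mathbf{A})_{k,k}=\sum_{i=1}^K (\mathbf{Q})_{k,i}\,\lambda_i\,(\mathbf{Q}^H)_{i,k}=\sum_{i=1}^K |q_{ki}|^2\lambda_i,
\]
so that $(\mathbf{A})_{k,k}$ is a convex combination of the eigenvalues with random weights $x_i=|q_{ki}|^2$. Because $\widehat{\mathbf{G}}$ has i.i.d.\ complex Gaussian entries, $\widehat{\mathbf{G}}\widehat{\mathbf{G}}^H$ is a complex Wishart matrix whose eigenvector matrix $\mathbf{Q}$ is Haar-distributed on the unitary group and independent of the eigenvalues $\lambda_i$; consequently the $k$-th row of $\mathbf{Q}$ is uniform on the unit sphere of $\mathbb{C}^K$, and $(x_1,\dots,x_K)$ follows the symmetric Dirichlet law on the simplex.

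First I would square and take the expectation, obtaining
\[
\mathrm{E}\{(\mathbf{A})_{k,k}^2\}=\sum_{i=1}^K \lambda_i^2\,\mathrm{E}\{x_i^2\}+\sum_{i\neq j}\lambda_i\lambda_j\,\mathrm{E}\{x_i x_j\}.
\]
The required second moments are the standard Haar-unitary (equivalently Dirichlet) values $\mathrm{E}\{x_i^2\}=\frac{2}{K(K+1)}$ and $\mathrm{E}\{x_i x_j\}=\frac{1}{K(K+1)}$ for $i\neq j$, which I would obtain either from the covariance structure of the symmetric Dirichlet distribution or directly from the fourth-order moment formulas for entries of a Haar unitary matrix. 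Substituting these and using the algebraic identity $\sum_{i\neq j}\lambda_i\lambda_j=(\sum\lambda)^2-\sum\lambda^2$ collapses the expression to $\frac{1}{K(K+1)}\bigl[2\sum\lambda^2+(\sum\lambda)^2-\sum\lambda^2\bigr]=\frac{1}{K(K+1)}\bigl[(\sum\lambda)^2+\sum\lambda^2\bigr]$, which is exactly $\mu(\lambda)$.

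The main obstacle is justifying the distributional facts rather than the algebra: one must argue that $\mathbf{Q}$ is genuinely Haar-distributed, so that its row weights are exchangeable and uniform on the sphere, and pin down the two fourth-moment values correctly, since any error there propagates directly into the leading constant. Once the moments $\frac{2}{K(K+1)}$ and $\frac{1}{K(K+1)}$ are established, the remainder is a short expansion. The only modelling caveat worth flagging is that the identity holds in expectation over $\mathbf{Q}$, consistent with the averaging already used to derive $N(\theta,\lambda)$ above.
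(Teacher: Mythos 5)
Your proposal is correct. The paper gives no proof of Lemma 1 at all --- it simply states that the result ``can be directly obtained in [3]'' (Peel, Hochwald, and Swindlehurst) --- and your argument is precisely the standard derivation that reference relies on: writing $(\mathbf{A})_{k,k}=\sum_i |q_{ki}|^2\lambda_i$, invoking that the eigenvector matrix of a complex Wishart matrix is Haar-distributed and independent of the eigenvalues so that the weights $|q_{ki}|^2$ are Dirichlet$(1,\dots,1)$ on the simplex, and substituting the moments $\mathrm{E}\{x_i^2\}=\tfrac{2}{K(K+1)}$ and $\mathrm{E}\{x_ix_j\}=\tfrac{1}{K(K+1)}$. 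The algebra checks out and yields exactly $\mu(\lambda)$; your flagged caveat that the statement implicitly requires $\mathbf{Q}$ to be Haar (which holds here since $\widehat{\mathbf{G}}$ has i.i.d.\ complex Gaussian entries) is the right one, and is the only hypothesis the paper leaves unstated.
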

\begin{figure*}[!t]
\begin{equation}\label{30}
\rho_r=\left(\frac{P_r}{\mathrm{tr}\left(\widehat{\mathbf{G}}{\widehat{\mathbf{G}}}^H\left(\widehat{\mathbf{G}}{\widehat{\mathbf{G}}}^H+\alpha\mathbf{I}_k\right)^{-2}\left(\rho_s^2{\mathbf{\Theta}}^2+\rho_s^2e_1^2\mathbf{\Omega}_1\mathbf{\Omega}_1^H+\sigma_1^2\mathbf{I}_k\right)\right)}\right)^{\frac{1}{2}}
=\left(\frac{ P_r}{(\frac{P_s}{K^2} \sum \theta+e_1^2 P_s+\sigma_1^2 )\sum \frac{\lambda}{(\lambda+\alpha)^2}}\right)^{\frac{1}{2}}.
\end{equation}
\begin{equation}\label{1}
\mathrm{SINR}\overset{w.p.}{\longrightarrow}\frac{\frac{P_s}{M}\left(R{\mathcal{E}}_1^\theta\mathcal{E}_1^\lambda\right)^2}
{\frac{P_sR(M-1)}{M^2}{\mathcal{E}}_3^\theta{\mathcal{E}}_3^\lambda+\left(e_1^2P_s+\sigma_1^2\right)R{\mathcal{E}}_2^\theta{\mathcal{E}}_3^\lambda
+P_sRe_2^2{\mathcal{E}}_3^\theta{\mathcal{E}}_2^\lambda
+e_2^2\sigma_1^2RM{\mathcal{E}}_2^\theta{\mathcal{E}}_2^\lambda
+\sigma_2^2\rho_r^{-2}}.
\end{equation}
\hrulefill
\end{figure*}
The proof of Lemma 1 can be directly obtained in~\cite{3}.
\begin{lemma}
If $\mathbf{A}=\mathbf{Q}\mathbf{\Lambda}{\mathbf{Q}}^H$, then $\mathrm{E}\left\{\left(\mathbf{A}\right)_{k,j}^2\right\}=\frac{1}{(K-1)(K+1)}\sum\lambda^2-\frac{1}{(K-1)K(K+1)}\left(\sum \lambda\right)^2\triangleq\nu(\lambda)$, for $k\neq j$.
\end{lemma}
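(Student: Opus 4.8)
The plan is to read $(\mathbf{A})_{k,j}^2$ as the squared modulus $|(\mathbf{A})_{k,j}|^2$ (the entry is complex off the diagonal, and this is the quantity that enters the interference power), and to extract $\nu(\lambda)$ from Lemma~1 via a conservation identity rather than recomputing fourth--order moments from scratch. First I would note that, since $\widehat{\mathbf{G}}\widehat{\mathbf{G}}^H$ is a complex Wishart matrix, its eigenvector matrix $\mathbf{Q}$ is Haar--distributed on the unitary group and independent of the $\lambda_i$; consequently the joint law of the entries of $\mathbf{A}=\mathbf{Q}\mathbf{\Lambda}\mathbf{Q}^H$ is invariant under simultaneous permutation of its rows and columns, so $\mathrm{E}\{|(\mathbf{A})_{k,j}|^2\}$ takes one common value $\nu(\lambda)$ for every off--diagonal pair $k\neq j$.

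The key step is a row--sum identity. Because $\mathbf{A}$ is Hermitian and $\mathbf{Q}$ is unitary,
\begin{equation}
\sum_{j=1}^K|(\mathbf{A})_{k,j}|^2=(\mathbf{A}\mathbf{A}^H)_{k,k}=(\mathbf{Q}\mathbf{\Lambda}^2\mathbf{Q}^H)_{k,k},
\end{equation}
so the left--hand side is again the $(k,k)$ entry of a unitary conjugation of a diagonal matrix, now carrying eigenvalues $\lambda_i^2$. Taking expectations and applying the first--moment property $\mathrm{E}\{(\mathbf{Q}\mathbf{D}\mathbf{Q}^H)_{k,k}\}=\frac{1}{K}\mathrm{tr}(\mathbf{D})$ (which follows from $\mathrm{E}\{|Q_{ki}|^2\}=1/K$) with $\mathbf{D}=\mathbf{\Lambda}^2$ yields $\mathrm{E}\{\sum_{j}|(\mathbf{A})_{k,j}|^2\}=\frac{1}{K}\sum\lambda^2$.

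Next I would split this row sum into the diagonal term $j=k$ and the $K-1$ off--diagonal terms. By Lemma~1 the diagonal term has expectation $\mu(\lambda)$, while by the permutation symmetry noted above each off--diagonal term contributes $\nu(\lambda)$, giving the single linear relation
\begin{equation}
\mu(\lambda)+(K-1)\,\nu(\lambda)=\frac{1}{K}\sum\lambda^2.
\end{equation}
Solving for $\nu(\lambda)$ and inserting $\mu(\lambda)=\frac{1}{K(K+1)}\big((\sum\lambda)^2+\sum\lambda^2\big)$ then reduces, after collecting the $\sum\lambda^2$ coefficients through $\frac{1}{K}-\frac{1}{K(K+1)}=\frac{1}{K+1}$, to the asserted formula for $\nu(\lambda)$.

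The only genuinely substantive ingredients are the first--moment identity $\mathrm{E}\{(\mathbf{Q}\mathbf{D}\mathbf{Q}^H)_{k,k}\}=\frac{1}{K}\mathrm{tr}(\mathbf{D})$ and the equality of all off--diagonal expectations; both rest on the unitary invariance of the Haar measure (equivalently $\mathrm{E}\{\mathbf{Q}\mathbf{e}_i\mathbf{e}_i^H\mathbf{Q}^H\}=\frac{1}{K}\mathbf{I}_K$). A direct alternative, parallel to the derivation of Lemma~1 in~\cite{3}, would instead expand $|(\mathbf{A})_{k,j}|^2=\sum_{i,l}\lambda_i\lambda_l\,Q_{ki}\overline{Q_{kl}}\,\overline{Q_{ji}}Q_{jl}$ and evaluate the mixed fourth moments of the Haar entries; the main obstacle there is the index bookkeeping, in particular tracking which coincidences survive once $k\neq j$ forces certain contributions to vanish, which is precisely the labor the row--sum argument lets me sidestep.
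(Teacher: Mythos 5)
Your proof is correct and follows essentially the same route as the paper: both use the row-sum identity $\sum_j|(\mathbf{A})_{k,j}|^2=(\mathbf{Q}\mathbf{\Lambda}^2\mathbf{Q}^H)_{k,k}$ with expectation $\frac{1}{K}\sum\lambda^2$, subtract the diagonal term from Lemma~1, and divide by $K-1$ using the equality of all off-diagonal expectations. Your write-up is somewhat more explicit about the Haar-invariance justifications, but the argument is the same.
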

\begin{proof}
Because $\mathbf{A}$ is a conjugate symmetric matrix, we have
\begin{multline}
\mathrm{E}\left\{\sum_{j=1,j\neq k}^K
|\left(\mathbf{A}\right)_{k,j}|^2\right\}+\mathrm{E}\left\{\left(\mathbf{A}\right)_{k,k}^2\right\}
=\mathrm{E}\left\{\left(\mathbf{A}\mathbf{A}^H\right)_{k,k}\right\}
\\=\mathrm{E}\left\{\left(\mathbf{Q}\mathbf{\Lambda}^2\mathbf{Q}^H\right)_{k,k}\right\}=\frac{1}{K}\sum\lambda^2.
\end{multline}
Since $\mathrm{E}\left\{\left(\mathbf{A}\right)_{k,j}^2\right\}$ are all equal for $j\neq k$, we have
\begin{multline}
\mathrm{E}\left\{|\left(\mathbf{A}\right)_{k,j}|^2\right\}=\frac{1}{(K-1)}\left(\frac{1}{K}\sum\lambda^2-\mathrm{E}\{\left(\mathbf{A}\right)_{k,k}^2\}\right)
\\=\frac{1}{(K-1)(K+1)}\sum\lambda^2-\frac{1}{(K-1)K(K+1)}\left(\sum \lambda\right)^2.
\end{multline}
\end{proof}
Therefore, for user-$k$, if we denote $\mathbf{A}=\widehat{\mathbf{G}}{\widehat{\mathbf{G}}}^H\left(\widehat{\mathbf{G}}{\widehat{\mathbf{G}}}^H+\alpha\mathbf{I}_k\right)^{-1}$, we can calculate the power of desired signal as
\begin{equation}
\mathrm{E}\left\{\left\|\mathbf{A}_{k,k}\theta_k
\left(\mathbf{s}\right)_k\right\|^2\right\}=\rho_s^2\rho_r^2
\theta_k \mu\left(\frac{\lambda}{\lambda+\alpha}\right).
\end{equation}
The power of interference is
\begin{equation}
\mathrm{E}\left\{\left\|\sum_{j=1,j\neq k}^K\mathbf{A}_{k,j}\theta_j
\left(\mathbf{s}\right)_j\right\|^2\right\}=\rho_s^2\rho_r^2
\left(\sum_{j=1,j\neq k}^K\theta_j\right)
\nu\left(\frac{\lambda}{\lambda+\alpha}\right).
\end{equation}
Finally, the SINR at user-$k$ is
\begin{equation}
\mathrm{SINR}_k=\frac{\rho_s^2\rho_r^2 \theta_k \mu(\frac{\lambda}{\lambda+\alpha})}
{\rho_s^2\rho_r^2 \left(\sum_{j=1,j\neq k}^K\theta_j\right) \nu(\frac{\lambda}{\lambda+\alpha})+N(\theta,\lambda)}.
\end{equation}
Note that in the above expression, the SINR is based on the eigenvalue of instantaneous imperfect CSIs.
To maximize the SINR expression, we introduce the following lemma which is a conclusion of the Appendix B in~\cite{3}.
\begin{lemma}
\begin{equation}
\mathrm{SINR}(\alpha)=\frac{A\left(\sum\frac{\lambda}{\lambda+\alpha}\right)^2+B\sum\frac{\lambda^2}{(\lambda+\alpha)^2}}{C\sum\frac{\lambda}{(\lambda+\alpha)^2}+D\sum\frac{\lambda^2}{(\lambda+\alpha)^2}+E\left(\sum\frac{\lambda}{\lambda+\alpha}\right)^2},
\end{equation}
for large $K$, is maximized by $\alpha=C/D$.
\end{lemma}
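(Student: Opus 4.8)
The plan is to first use the large-$K$ hypothesis to strip the expression down to a ratio whose optimum can be read off exactly, and then to locate that optimum by a single application of the Cauchy--Schwarz inequality. Writing $S_1=\sum\frac{\lambda}{\lambda+\alpha}$, $S_2=\sum\frac{\lambda^2}{(\lambda+\alpha)^2}$ and $S_3=\sum\frac{\lambda}{(\lambda+\alpha)^2}$, each is a sum of $K$ bounded terms, so that $S_1,S_2,S_3=\Theta(K)$ while $S_1^2=\Theta(K^2)$. Comparing the magnitudes of the coefficients, the numerator is then dominated by $A S_1^2$ (the term $B S_2$ being smaller by a factor $\Theta(1/K)$) and the denominator by $C S_3+D S_2$ (the term $E S_1^2$ being negligible against $D S_2$ for large $K$). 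Hence, to leading order, maximizing $\mathrm{SINR}(\alpha)$ is equivalent to maximizing
\begin{equation}
f(\alpha)=\frac{S_1^2}{C S_3+D S_2}=\frac{\left(\sum\frac{\lambda}{\lambda+\alpha}\right)^2}{\sum\frac{\lambda\left(C+D\lambda\right)}{(\lambda+\alpha)^2}}.
\end{equation}

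Next I would bound $f$ from above by Cauchy--Schwarz. Setting $a=\sqrt{\lambda/(C+D\lambda)}$ and $b=\sqrt{\lambda(C+D\lambda)}/(\lambda+\alpha)$, one has $ab=\lambda/(\lambda+\alpha)$, so that
\begin{equation}
S_1^2=\left(\sum a\,b\right)^2\le\left(\sum\frac{\lambda}{C+D\lambda}\right)\left(\sum\frac{\lambda\left(C+D\lambda\right)}{(\lambda+\alpha)^2}\right).
\end{equation}
Dividing through gives $f(\alpha)\le\sum\frac{\lambda}{C+D\lambda}$, and crucially the right-hand side does not depend on $\alpha$. Thus $\sum\frac{\lambda}{C+D\lambda}$ is an $\alpha$-independent ceiling for $f$, and it suffices to exhibit an $\alpha$ that attains it.

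Finally, I would invoke the equality condition of Cauchy--Schwarz: equality holds iff $a$ and $b$ are proportional across the index set, i.e. iff $(\lambda+\alpha)/(C+D\lambda)$ is the same constant for every eigenvalue $\lambda$. Matching the coefficient of $\lambda$ against the constant term forces that ratio to equal $1/D$ together with $\alpha=C/D$. Since the $\alpha$-independent ceiling is actually attained at this interior point, $\alpha=C/D$ is the global maximizer of $f$, which establishes the claim.

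The main obstacle is not the optimization itself --- the Cauchy--Schwarz step is exact and the equality analysis pins down $\alpha=C/D$ unambiguously --- but rather the justification of the reduction in the first paragraph. I would need to make precise, in the large-$K$ regime, that the discarded contributions $B S_2$ and $E S_1^2$ (together with any additive noise constant) are of strictly lower order than the retained terms $A S_1^2$ and $C S_3+D S_2$, using the boundedness of each summand $\lambda/(\lambda+\alpha)\in(0,1)$ together with the explicit $\Theta(1/K)$ scalings of the coefficients inherited from $\rho_s^2$ and $\rho_r^2$. Once that order-counting is in place, the remainder is the clean Cauchy--Schwarz argument above.
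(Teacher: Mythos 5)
The paper does not actually prove this lemma; it imports it as ``a conclusion of Appendix~B in~\cite{3}'', where the optimal regularization parameter is obtained by differentiating the large-$K$ SINR with respect to $\alpha$ and solving the resulting stationarity condition. Your Cauchy--Schwarz argument is therefore a genuinely different route, and its core step is correct: with $a=\sqrt{\lambda/(C+D\lambda)}$ and $b=\sqrt{\lambda(C+D\lambda)}/(\lambda+\alpha)$ you obtain the $\alpha$-independent ceiling $S_1^2/(CS_3+DS_2)\le\sum\lambda/(C+D\lambda)$, attained precisely when $(C+D\lambda)/(\lambda+\alpha)$ is constant over the spectrum, i.e.\ at $\alpha=C/D$. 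This delivers \emph{global} optimality of the reduced objective rather than mere stationarity, which is a real advantage over the derivative computation in~\cite{3}.

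The gap is in the reduction, and it is worse than the unfinished order count you flag: the claim that $ES_1^2$ is negligible against $DS_2$ is false for the coefficients this lemma is actually applied to. There $E$ and the dominant part of $D$ both arise from $\nu(\cdot)$ of Lemma~2, with $E=\Theta(D/K)$, while $S_1^2=\Theta(KS_2)$; hence $ES_1^2$ and $DS_2$ are of the \emph{same} order and neither dominates (this is already visible in Lemma~2, whose two terms are comparable). The term can nevertheless be discarded, but for a different reason: once the numerator is reduced to $AS_1^2$ --- which is legitimate, since $B=\Theta(A)$ and $S_2/S_1^2=\Theta(1/K)$ --- one has
\[
\frac{1}{\mathrm{SINR}(\alpha)}\approx\frac{CS_3+DS_2}{AS_1^2}+\frac{E}{A},
\]
so the $E$-term contributes an additive constant independent of $\alpha$ and drops out of the argmax with no smallness assumption at all. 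With that substitution your argument goes through; note also that the Cauchy--Schwarz step tacitly requires $C,D>0$ (so that $C+D\lambda>0$ and $\alpha=C/D$ is feasible), which holds in the application since both coefficients are built from powers and noise variances.
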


Using Lemma 3, we finally get the optimized
\begin{equation}
\alpha_{SVD-RZF,\mathrm{opt}}=\frac{\frac{e_2^2 \sum\theta}{K}+\frac{e_2^2\sigma_1^2K}{P_s}+\frac{\sigma_2^2}{P_r}(\frac{\sum\theta}{K}+Ke_1^2+\frac{\sigma_1^2K}{P_s})}{\frac{\sum\theta_j}{(K-1)(K+1)}+e_1^2+\frac{\sigma_1^2}{P_s}}.
\end{equation}
For large $K$, we have
\begin{equation}
\begin{split}
\alpha_{SVD-RZF,\mathrm{opt}}&\approx
\frac{\frac{e_2^2 K^2}{K}+\frac{e_2^2\sigma_1^2K}{P_s}+\frac{\sigma_2^2}{P_r}(\frac{K^2}{K}+Ke_1^2+\frac{\sigma_1^2K}{P_s})}{\frac{K(K-1)}{(K-1)(K+1)}+e_1^2+\frac{\sigma_1^2}{P_s}}
\\&\approx K \left(\frac{e_2^2+e_2^2\sigma_1^2/P_s}{1+e_1^2+\sigma_1^2/P_s}+\sigma_2^2/P_r\right).
\end{split}
\end{equation}
\subsection{MMSE-RZF based design for multi-relay case}
Although SVD is advantageous, it can only be implemented in the single relay case. For the multi-relay case, the relays have to work in a cooperative mode to diagonalize the channel as SVD or the base station needs the CSI of all the backward channnels which will lead to considerable delay. Therefore, for the multi-relay case, we propose another beamforming scheme which is based on MMSE-RZF instead of SVD-RZF.

It is known that MMSE receiver is widely used in point-to-point MIMO
systems.  The MMSE receiver can be viewed as a duality of the RZF
precoder, where the difference is that the RZF precoder is
frequently used in multiantenna multiuser communication. Our main
idea is to obtain the optimal regularizing factor in MMSE receiver
to reduce the effect of channel estimation error of the backward
channels.

The MMSE receiver at the $r$-th relay is
$\left(\widehat{\bf{H}}_r^H\widehat{\bf{H}}_r+\alpha^{\mathrm{MMSE}}\right)^{-1}\widehat{\bf{H}}_r^H$.
For the same reason as RZF, MMSE receiver is most superior to other
linear receivers (e.g. ZF) when $M=N$. So we consider $M=N=K$ for
the multi-relay case. Because the aim of MMSE receiver is to reduce
the effect of channel estimation error of BC, we optimize
$\alpha^{\mathrm{MMSE}}$ by idealizing the forward channels as
Gaussian channels, i.e., the forward channel is considered as
$\widehat{\bf{G}}_r={\bf{G}}_r={\bf{I}}_N$.

In the following analysis, we use the decompositions,
\begin{eqnarray}
\widehat{\bf{H}}_r^H\widehat{\bf{H}}_r&=&\mathbf{P}_r\mathrm{diag}\{\theta_{r,1},\ldots,\theta_{r,N}\}\mathbf{P}_r^H,\\
\widehat{\bf{G}}_r\widehat{\bf{G}}_r^H&=&\mathbf{Q}_r\mathrm{diag}\{\lambda_{r,1},\ldots,\lambda_{r,N}\}\mathbf{Q}_r^H,
\end{eqnarray}
where $\mathbf{P}_r$ and $\mathbf{Q}_r$ are unitary matrices. For the $r$-th relay, the signal vector processed by an MMSE receiver is
\begin{equation}
\begin{split}
{\bf{v}}_r  =& \rho_r\left({\widehat{\bf{H}}}_r^H {\widehat{\bf{H}}}_r+ \alpha^{\mathrm{MMSE}} {\bf{I}}_M \right)^{-1}{\widehat{\bf{H}}}_r^H {\bf{r}}_r\\=&\rho_r\left({\widehat{\bf{H}}}_r^H {\widehat{\bf{H}}}_r+ \alpha^{\mathrm{MMSE}} {\bf{I}}_M \right)^{-1}{\widehat{\bf{H}}}_r^H {\widehat{\bf{H}}}_r {\bf{s}}\\& + \rho_r e_1\left({\widehat{\bf{H}}}_r^H {\widehat{\bf{H}}}_r+ \alpha^{\mathrm{MMSE}} {\bf{I}}_M \right)^{-1}{\widehat{\bf{H}}}_r^H {\bf{\Omega}}_{1,r}
{\bf{s}}\\&+\rho_r\left({\widehat{\bf{H}}}_r^H {\widehat{\bf{H}}}_r+ \alpha^{\mathrm{MMSE}} {\bf{I}}_M \right)^{-1}{\widehat{\bf{H}}}_r^H{\bf{n}}_r.
\end{split}
\end{equation}
Using similar manipulations with the single relay case, the SINR of the $k$-th user's data at the $r$-th relay is
\begin{equation}
\mathrm{SINR}_{r,k}^{\mathrm{R}}=\frac{\frac{P_s}{M}\mu\left(\frac{\theta_r}{\theta_r+\alpha^{\mathrm{MMSE}}}\right)}
{\frac{P_s(M-1)}{M}\nu\left(\frac{\theta_r}{\theta_r+\alpha^{\mathrm{MMSE}}}\right)+\frac{e_1^2P_s+\sigma_1^2}{M}\sum\frac{\theta_r}{\left(\theta_r+\alpha^{\mathrm{MMSE}}\right)^2}}.
\end{equation}
At the destination, the received vector is from all the $R$ relays.
So the desired signal is scaled by $R^2$ and the interference and the noise inherited from the relays are scaled by $R$.
Therefore, by idealizing the forward channels, we have the SINR of the $k$-th stream as
\begin{equation}\label{19}
\begin{split}
&\mathrm{SINR}_{k}^{\mathrm{D}}\\\approx& \frac{\frac{P_sR^2}{M}\mu\left(\frac{\theta}{\theta+\alpha^{\mathrm{MMSE}}}\right)}
{R\frac{P_s(M-1)}{M}\nu\left(\frac{\theta}{\theta+\alpha^{\mathrm{MMSE}}}\right)+R\frac{e_1^2P_s+\sigma_1^2}{M}\sum\frac{\theta}{\left(\theta+\alpha^{\mathrm{MMSE}}\right)^2}+\sigma_2^2\rho_r^{-2}},
\end{split}
\end{equation}
where the power control factor $\rho_r$ at relay normalizes the noise at the destination.
We use the same $\rho_r$ for all the relays for the simplicity of analysis by taking expectation to the denominator in (\ref{8}).
Using Lemma 3, we obtain
\begin{equation}\label{29}
\begin{split}
\alpha^{\mathrm{MMSE,opt}}=&\frac{\frac{P_s e_1^2+\sigma_1^2}{P_r}\sigma_2^2+R\frac{P_se_1^2+\sigma_1^2}{M}}{\frac{P_s\sigma_2^2}{MP_r}+R\frac{P_s(M-1)}{M}\frac{1}{(M-1)(M+1)}}
\\=&\left(e_1^2+\frac{\sigma_1^2}{P_s}\right)\frac{M+\frac{P_rR}{\sigma_2^2}}{1+\frac{P_rR}{(M+1)\sigma_2^2}}.
\end{split}
\end{equation}

To obtain the optimal $\alpha^{\mathrm{RZF}}$, we need to derive the asymptotic SINR of the system. Again, we separate the desired signals from the interference and the noise and finally derive the SNR at the $k$-th user terminal as
\begin{equation}\label{13}
\mathrm{SINR}_k^{\mathrm{D}}=\frac{\frac{P_s}{M}|(\mathbf{H}_\mathcal{SD})_{k,k}|^2}
{\frac{P_s}{M}\sum_{j=1,j\neq k}^K|(\mathbf{H}_\mathcal{SD})_{j,k}|^2+N\left({\bf{G}}_r,{\bf{H}}_r\right) },
\end{equation}
where
\begin{equation}
\mathbf{H}_\mathcal{SD}=\sum_{r=1}^R\rho_r{\widehat{\bf{G}}}_r {\bf{W}}_r{\widehat{\bf{H}}}_r,
\end{equation}
and
\begin{multline}\label{36}
N\left({\bf{G}}_r,{\bf{H}}_r\right)
=\left(e_1^2P_s+\sigma_1^2\right)\sum_{r=1}^R\left\|\rho_r\left(\widehat{{\bf{G}}}_r{\bf{W}}_r\right)_k\right\|^2
\\+\frac{P_se_2^2}{M}\sum_{r=1}^R\rho_r^2\mathrm{tr}\left(\mathbf{W}_r\widehat{\mathbf{H}}_r\widehat{\mathbf{H}}_r^H\mathbf{W}_r^H\right)
\\+e_2^2\sigma_1^2\sum_{r=1}^R\rho_r^2\mathrm{tr}\left(\mathbf{W}_r\mathbf{W}_r^H\right)+\sigma_2^2.
\end{multline}
For the case of large $R$, using Law of Large Number, we have
\begin{equation}\label{14}
\begin{split}
&|\left({\bf{H}}_{{\mathcal{S}\mathcal {D}}}\right)_{i,i}|\overset{w.p.}{\longrightarrow}R\rho_r\left(\mathrm{E}\left\{\left({\widehat{\bf{G}}}_r {\bf{W}}_r{\widehat{\bf{H}}}_r\right)_{i,i}\right\}\right)\\
=&R\rho_r\mathrm{E}\biggl\{\left(\mathbf{Q}_r\frac{\mathbf{\Lambda}_r}{\mathbf{\Lambda}_r+\alpha^{\mathrm{RZF}}\mathbf{I}_M}\mathbf{Q}_r^H\mathbf{P}_r\frac{\mathbf{\Theta}_r}{\mathbf{\Theta}_r+\alpha^{\mathrm{MMSE}}\mathbf{I}_N}\mathbf{P}_r^H\right)_{i,i}\biggl\}
\\=&R\rho_r\mathrm{E}\left\{\left(\mathbf{Q}_r\frac{\mathbf{\Lambda}_r}{\mathbf{\Lambda}_r+\alpha^{\mathrm{RZF}}\mathbf{I}_M}\mathbf{Q}_r^H\right)_{m,m}\right\}
\\&\quad\quad\quad\quad\quad\quad\quad\quad\quad\quad\mathrm{E}\left\{\left(\mathbf{P}_r\frac{\mathbf{\Theta}_k}{\mathbf{\Theta}_r+\alpha^{\mathrm{MMSE}}\mathbf{I}_N}\mathbf{P}_r^H\right)_{n,n}\right\}
\\=&\frac{R\rho_r}{MN}\mathrm{E}\left\{\sum\frac{\theta_r}{\theta_r+\alpha^{\mathrm{MMSE}}}\right\}\mathrm{E}\left\{\sum\frac{\lambda_r}{\lambda_r+\alpha^{\mathrm{RZF}}}\right\}
\\=&R\rho\mathrm{E}\left\{\frac{\theta}{\theta+\alpha^{\mathrm{MMSE}}}\right\}\mathrm{E}\left\{\frac{\lambda}{\lambda+\alpha^{\mathrm{RZF}}}\right\},
\end{split}
\end{equation}
and
\begin{equation}\label{15}
\begin{split}
&\quad\left|({\bf{H}}_{{\mathcal{S}\mathcal {D}}})_{(i,j)}\right|^2\\
&=\left|\left(\sum_{r=1}^R\mathbf{Q}_r\frac{\mathbf{\Lambda}_r}{\mathbf{\Lambda}_r+\alpha^{\mathrm{RZF}}\mathbf{I}_M}\mathbf{Q}_r^H\mathbf{P}_r\frac{\mathbf{\Theta}_r}{\mathbf{\Theta}_r+\alpha^{\mathrm{MMSE}}\mathbf{I}_N}\mathbf{P}_r^H\right)_{(i,j)}\right|^2
\\
&=\sum_r\left|(\mathbf{Q}_r)_{i,k}\frac{\lambda_{r,k}}{\lambda_{r,k}+\alpha^{\mathrm{MMSE}}}
(\mathbf{Q}_r)_{l,k}^*\right.\\&\quad\quad\quad\quad\quad\quad\quad\quad\quad\quad\quad\quad\left.(\mathbf{P}_r)_{l,m}\frac{\theta_{r,m}}{\theta_{r,m}+\alpha^{\mathrm{RZF}}}(\mathbf{Q}_r)_{j,m}^*\right|^2
\\&\overset{(a)}{\approx}\sum_{k,m,n,r}\frac{1}{M^4}\left(\frac{\lambda_{r,k}}{\lambda_{r,k}+\alpha^{\mathrm{RZF}}}\right)^2\left(\frac{\theta_{r,k}}{\theta_{r,k}+\alpha^{\mathrm{MMSE}}}\right)^2
\\&\overset{w.p.}{\longrightarrow}\frac{R}{M}\mathrm{E}\left\{\frac{\theta^2}{(\theta+\alpha^{\mathrm{MMSE}})^2}\right\}
\mathrm{E}\left\{\frac{\lambda^2}{(\lambda+\alpha^{\mathrm{RZF}})^2}\right\}
\end{split}
\end{equation}
where in (a) we approximate $\mathrm{E}\left\{|(\mathbf{Q}_r)_{i,k}|^2|(\mathbf{Q}_r)_{l,k}|^2\right\}\approx \frac{1}{M^2}$. In fact, this expectation is $\frac{2}{M(M+1)}$ if $i=l$ or $\frac{1}{M(M+1)}$ if $i\neq l$~\cite{3}.
Here we denote $\lambda$ and $\theta$ without subscript $r$ for simplicity, because all the channels for different relays are $i.i.d$.
Let us define the expectations as $
{\mathcal{E}}_1^\theta\triangleq\mathrm{E}\left\{\frac{\theta}{(\theta+\alpha^{\mathrm{MMSE}})}\right\},
{\mathcal{E}}_2^\theta\triangleq\mathrm{E}\left\{\frac{\theta}{(\theta+\alpha^{\mathrm{MMSE}})^2}\right\},
{\mathcal{E}}_3^\theta\triangleq\mathrm{E}\left\{\frac{\theta^2}{(\theta+\alpha^{\mathrm{MMSE}})^2}\right\},
{\mathcal{E}}_1^\lambda\triangleq\mathrm{E}\left\{\frac{\lambda}{(\lambda+\alpha^{\mathrm{RZF}})}\right\},
{\mathcal{E}}_2^\lambda\triangleq\mathrm{E}\left\{\frac{\lambda}{(\lambda+\alpha^{\mathrm{RZF}})^2}\right\},
{\mathcal{E}}_3^\lambda\triangleq\mathrm{E}\left\{\frac{\lambda^2}{(\lambda+\alpha^{\mathrm{RZF}})^2}\right\}.
$ Substituting (\ref{36})-(\ref{15}) into (\ref{13}), we obtain the
asymptotic SINR at each user terminal as (\ref{1}) at the top of the last page,
where
\begin{equation}
\begin{split}
&\rho_r^{-2}
=\frac{1}{P_r}\times\\&\mathrm{E}\left\{\frac{P_s}{M}\mathrm{tr}\left(\mathbf{F}_k(\widehat{\mathbf{H}}_k\widehat{\mathbf{H}}_k^H+e_1^2\mathbf{\Omega}_{1,k}\mathbf{\Omega}_{1,k}^H)\mathbf{F}_k^H\right)+\sigma_1^2\mathrm{tr}\left(\mathbf{F}_k\mathbf{F}_k^H\right)\right\}
\\=&\frac{P_s}{P_r}{\mathcal{E}}_3^\theta{\mathcal{E}}_2^\lambda+\frac{(e_1^2P_s+\sigma_1^2)M}{P_r}{\mathcal{E}}_2^\theta{\mathcal{E}}_2^\lambda.
\end{split}
\end{equation}
The calculation of (\ref{36}) can follow the same line as
(\ref{14}). Generally, the expectations in the asymptotic SINR are
difficult. Fortunately, if we approximate the expectations by the
arithmetic mean,
for large $R$, then the asymptotic SINR can be maximized by using Lemma 3. Finally, we obtain
\begin{equation}\label{24}
\alpha^{\mathrm{RZF},\mathrm{opt}}\approx\frac{(P_sRe_2^2+\frac{\sigma_2^2P_s}{P_r})\mathcal{E}_3^\theta+(e_2^2\sigma_1^2RM+\frac{(e_1^2P_s+\sigma_1^2)M}{P_r})\mathcal{E}_2^\theta}
{(e_1^2P_s+\sigma_1^2)R\mathcal{E}_2^\theta+\frac{P_s R}{M}\mathcal{E}_3^\theta}.
\end{equation}

Note that although we maximize the SINR for large $K$, and large $R$ for multi-relay case, we will see from the numerical simulation that the obtained beamforming is robust enough for small $K$ and $R$ when channel estimation error occurs.

\section{Simulation results}
In this section, numerical simulations have been carried out.
For the single relay case, we compare the SINR at each user terminal
of the robust SVD-RZF beamforming with SVD-ZF and SVD-MF
in~\cite{21}, MMSE-RZF in~\cite{26}, and two other relative
beamforming schemes such as ZF-ZF and SVD-RZF for references. For
MMSE-RZF, $\alpha_{\mathrm{MMSE}}=K\sigma_1^2/P_s$, and
$\alpha_{\mathrm{RZF}}=K\sigma_2^2/P_r$. We also consider the robust
MMSE-RZF proposed for multi-relay case for $R=1$. For the
multi-relay case, we compare with the conventional MMSE-RZF and
ZF-ZF. All the results are averaged over 10000 different channel
realizations.
\begin{figure}[!t]
\centering
\includegraphics[width=3.5in]{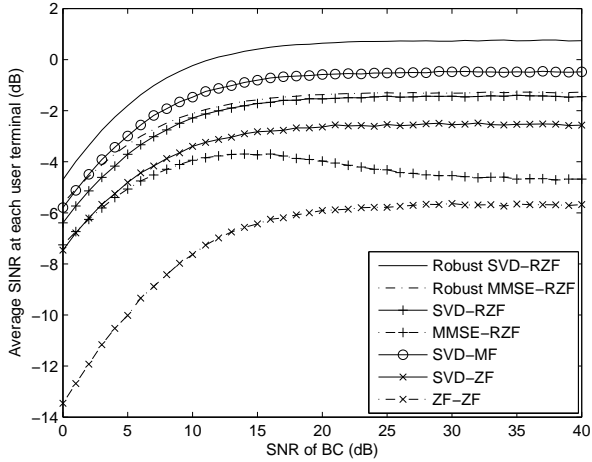}
\caption{The SINRs at each user terminal for different beamforming schemes
versus the SNR of BC in the single relay case. $P_r/\sigma_2^2=20dB$, $e_1^2=0.2$ and $e_2^2=0.2$. The robust MMSE-RZF and MMSE-RZF will change with the SNR of BC due to the regularizing factor in MMSE receiver.}
\end{figure}

\begin{figure}[!t]
\centering
\includegraphics[width=3.5in]{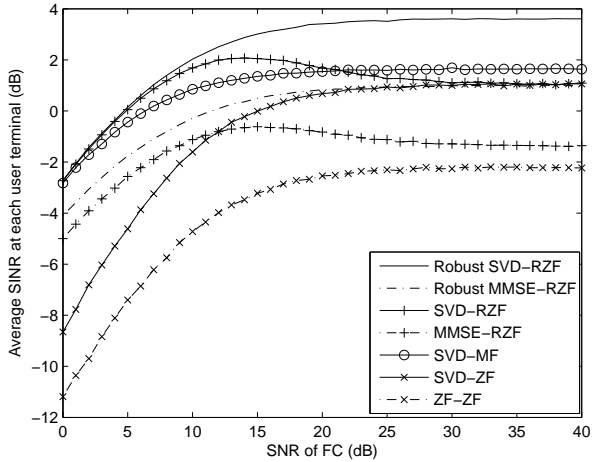}
\caption{The SINRs at each user terminal for different beamforming schemes
versus the SNR of FC in the single relay case. $P_s/\sigma_1^2=20dB$, $e_1^2=0.1$ and $e_2^2=0.1$. The beamformings with RZF will change with the SNR of FC due to the regularizing factor.}
\end{figure}
\begin{figure}[!t]
\centering
\includegraphics[width=3.5in]{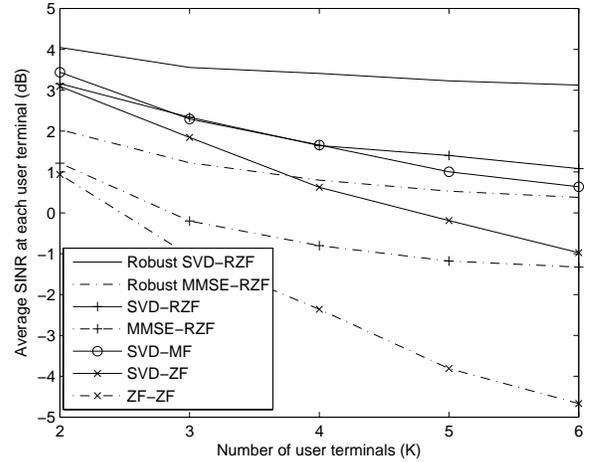}
\caption{The SINRs at each user terminal for different beamforming schemes
versus the number of users ($K$) in the single relay case. $P_s/\sigma_1^2=20dB$, $P_r/\sigma_2^2=20dB$, $e_1^2=e_2^2=0.1$.}
\end{figure}

\begin{figure}[!t]
\centering
\includegraphics[width=3.5in]{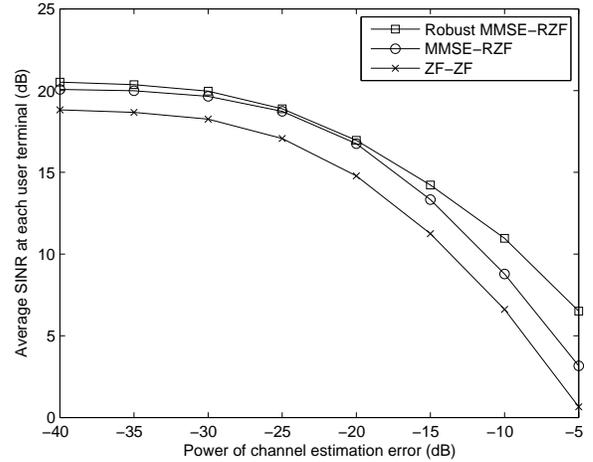}
\caption{The SINRs at each user terminal for different beamforming schemes
versus the power of channel estimation error in the multi-relay case. $P_s/\sigma_1^2=P_r/\sigma_2^2=20dB$, $e_1=e_2$, $R=10$. The power of channel estimation error is $e_1^2=e_2^2$.}
\end{figure}
\subsection{SINR performances for the single relay case}
Fig.~2 shows the SINRs of different
beamforming schemes  versus the SNR of BC. We observe that the
proposed robust SVD-RZF beamforming has consistently advantage to
others.
Robust MMSE-RZF underperforms robust SVD-RZF and SVD-MF, which shows the superior of SVD.
Fig.~3 shows the SINRs versus the SNR of FC.   The SINR of SVD-RZF
even falls and converges to SVD-ZF when the SNR of FC increases,
because the $\alpha$ converges to zero, which should remain nonzero
if estimation error is considered. Fig.~4 shows the SINRs versus the
number of users ($K$). We see that the robust SVD-RZF also
outperforms others when $K$ is small. The advantage of robust
SVD-RZF comes from the fact that the SVD beamforming  outperforms
robust MMSE receiver although the former ignores the estimation
error. For the broadcast phase, the robust RZF compensates well the
estimation error compared to ZF and RZF.
\begin{figure}[!t]
\centering
\includegraphics[width=3.5in]{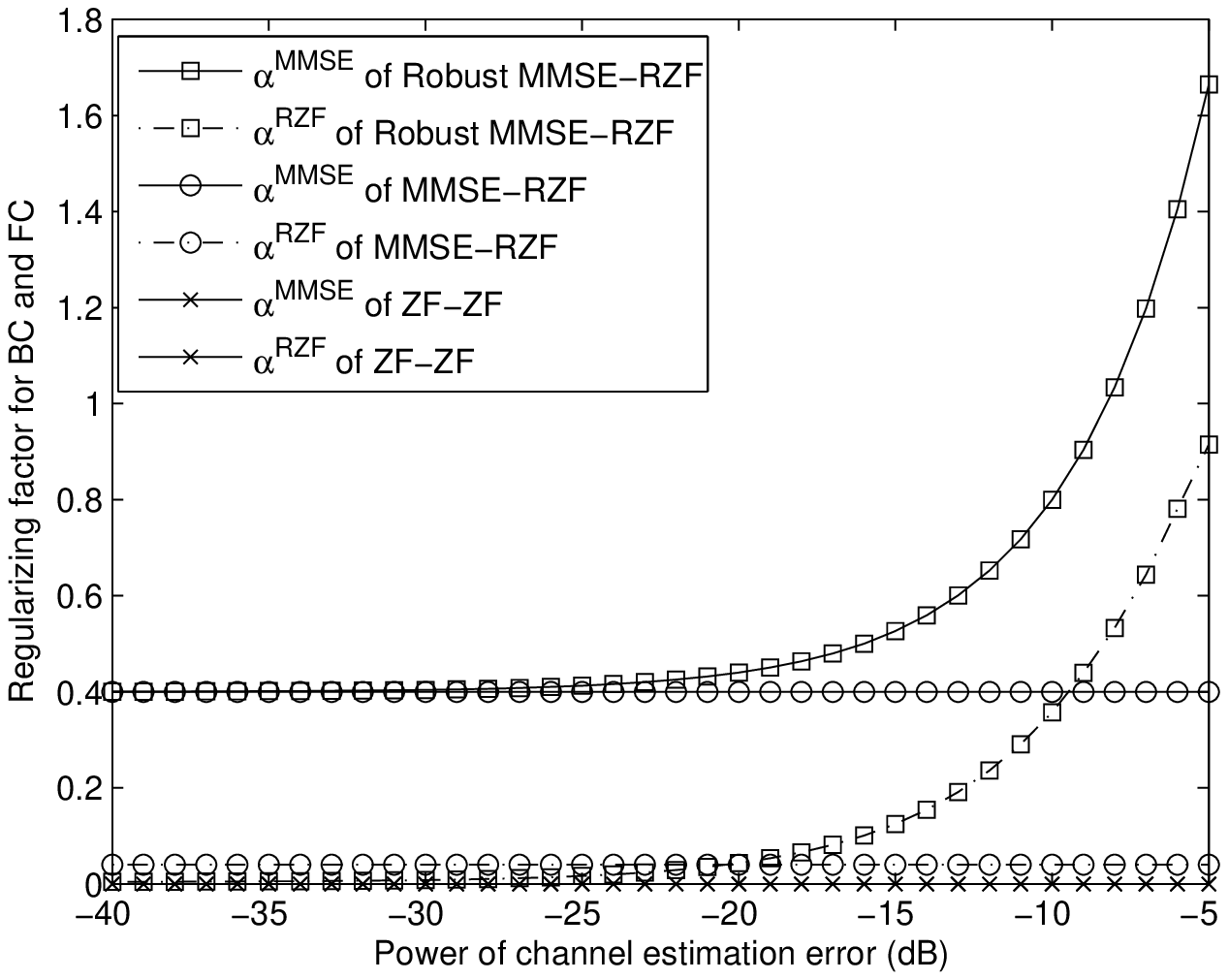}
\caption{$\alpha^{\mathrm{MMSE}}$ and $\alpha^{\mathrm{RZF}}$ for different beamforming schemes
versus the power of channel estimation error in the multi-relay case. $P_s/\sigma_1^2=10dB$, $P_r/\sigma_2^2=20dB$, $e_1=e_2$, $R=10$.}
\end{figure}

\begin{figure}[!t]
\centering
\includegraphics[width=3.5in]{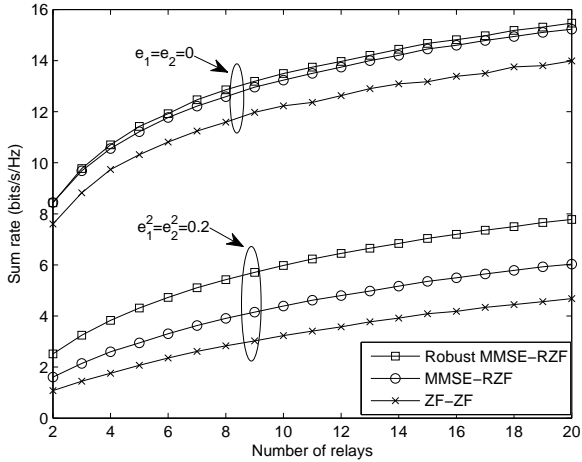}
\caption{The sum rates for different beamforming schemes
versus the number of relays ($R$) in the multi-relay case with perfect channel estimation and channel estimation errors. $P_s/\sigma_1^2=P_r/\sigma_2^2=20dB$, $e_1=e_2=0$ or $e_1^2=e_2^2=0.2$. The sum rate are averaged by $0.5K\mathrm{log}_2\left(1+\mathrm{SINR}_k\right)$. The factor $0.5$ is due to the two time slots transmission.}
\end{figure}
\subsection{SINR performances for the multi-relay case}
For multi-relay case where SVD can not be implemented, we only compare the proposed robust MMSE-RZF with MMSE-RZF and ZF-ZF.
Fig.~5 shows the average SINR performances versus the power of channel estimation error ($e_1^2=e_2^2$). This is because that the $\alpha^{\mathrm{MMSE}}$ and $\alpha^{\mathrm{RZF}}$ increase with $e_1$ and $e_2$ to decrease the effect of estimation error. This can be directly seen from Fig.~6.
Fig.~7 shows the sum rate performances versus the number of relays ($R$) with perfect and imperfect channel estimation. We see that all sum rates grows logarithmically with $R$ and the superior of robust MMSE-RZF increases when channel estimation is imperfect or the number of relays grows. This is because that comparing with conventional MMSE-RZF, the robust one considers both imperfect channel estimation and multiple relays.

\section{Conclusion}
In this paper we propose the robust SVD-RZF and robust MMSE-RZF
beamformers which consider imperfect channel estimation  for a
multiuser downlink MIMO relaying network. For the single relay case,
the SINR expression at user terminals based on the eigenvalue of BC
and FC matrix is  derived to obtain the optimized RZF. For the
multi-relay case, the asymptotic SINR is derived to obtain the
optimized MMSE and RZF. Simulation results show that the proposed
robust SVD-RZF and MMSE-RZF outperform the conventional schemes for
various  conditions of SNR of channels, power of estimation errors,
the number of antennas, users and the relays.

\appendices

\end{document}